\let\csname equation*\endcsname\relax
\let\csname endequation*\endcsname\relax
\newtheorem{lemma}{Lemma}
\newtheorem{proposition}{Proposition}
\newtheorem{corollary}{Corollary}
\newcommand{\id}{\text{id}}
\renewcommand\bra[1]{{\langle{#1}|}}
\renewcommand\ket[1]{{|{#1}\rangle}}
\newcommand{\op}[2]{|#1\rangle\langle #2|}
\newcommand{\mbb}[1]{\mathbb{#1}}
\newcommand{\mc}[1]{\mathcal{#1}}
\newcommand{\sgn}{\text{sgn}}
\definecolor{cool_green}{rgb}{0.0, 0.5, 0.0}
\begin{document}

\title{Channel Activation of CHSH Nonlocality}
\author{Yujie Zhang$^{1,4}$, Rodrigo Araiza Bravo$^{1,2,4}$, Virginia O. Lorenz$^1$, Eric Chitambar$^{3,5}$}
\date{\today}
    \address{$^1$ Department of Physics, University of Illinois at Urbana-Champaign, Urbana, IL 61801, USA}
\address{$^2$ Department of Physics, Harvard University, Cambridge, Massachussets 02138, USA}
\address{$^3$ Department of Electrical and Computer Engineering,University of Illinois at Urbana-Champaign, Urbana, Illinois 61801, USA}
\address{$^4$ Authors contributed equally to this work}
\address{$^5$ Author to whom any correspondence should be addressed}
\ead{echitamb@illinois.edu}
\vspace{10pt}

\begin{abstract}
Quantum channels that break CHSH nonlocality on all input states are known as CHSH-breaking channels. In quantum networks, such channels are useless for distributing correlations that can violate the CHSH Inequality. Motivated by previous work on activation of nonlocality in quantum states, here we demonstrate an analogous activation of CHSH-breaking channels.  That is, we show that certain pairs of CHSH-breaking channels are no longer CHSH-breaking when used in combination.  We find that this type of activation can emerge in both uni-directional and bi-directional communication scenarios.
\end{abstract}

%
%
%
%
%

\section{Introduction}
The mystery of quantum mechanics involves many counter-intuitive phenomena absent in classical mechanics.  The most celebrated method for revealing the nonlocal features of quantum theory was proposed by John Bell in 1968, in what is now known as violating a Bell Inequality~\cite{2,3}. In recent years, nonlocality has been identified as a resource in quantum information theory~\cite{Chitambar-2019a}, with applications in quantum cryptography~\cite{5,6}, quantum key distribution~\cite{7} and quantum randomness~\cite{8}.
\par
In general, entanglement and nonlocality appear to be different resources~\cite{9}. It has been shown that quantum entanglement is required to generate nonlocal correlations~\cite{9a}, but entanglement is not sufficient for a quantum state to violate a Bell Inequality. That is, examples of entangled states admitting local hidden variable (LHV) models have been found~\cite{10}. In some cases, nonlocal behavior can still be exhibited after local measurement and post-selection~\cite{11,12,13}. This reveals that nonlocal correlations are subtle in form, and they can become manifest in different scenarios.
\par
In particular, nonlocal correlations are capable of being activated.  In general, activation means that two quantum objects can be combined to retrieve a particular quantum resource that was absent before the combination. Activation has been studied in the case of quantum channel capacities~\cite{14,15} and quantum entanglement~\cite{16,17}. Recently, this idea was also applied to quantum nonlocality. As shown by Navascu\'{e}s and V\'{e}rtesi~\cite{18}, two states $\rho_1$ and $\rho_2$, which cannot individually exhibit nonlocality in the so-called Clauser-Horne-Shimony-Holt (CHSH) scenario (i.e. two dichotomic observables per site), can nevertheless violate the CHSH Inequality when measured jointly.  A more general result was demonstrated by Palazuelos~\cite{19} in which two copies of a Bell local state $\rho$ can become nonlocal, a phenomenon known as ``super-activation''.  Other examples of nonlocality activation and super-activation can be found in ~\cite{20,21,22,23,24}.
\par
These previous works only considered activation of nonlocality on the level of quantum states.  This analysis can be understood from a resource-theoretic perspective in which nonlocality is regarded as a \textit{static} quantum resource, manifesting in different multipartite quantum states in different extents \cite{Chitambar-2019a}.  Activation then describes a particular way of harnessing this resource among two or more quantum states.  Alternatively, one could consider a \textit{dynamical} resource theory of quantum nonlocality in which nonlocality is a distinctive feature of dynamical quantum objects, i.e. quantum channels.  Quantum channels are of primary importance in many quantum information protocols such as quantum network communication~\cite{22}, quantum key distribution~\cite{8} and quantum teleportation~\cite{24a}.  While there are many ways a dynamical resource theory of nonlocality could be formulated \cite{Liu-2019a, Liu-2019b}, one approach involves identifying the nonlocality of a channel in terms of its ability to transmit nonlocal correlations. In its simplest form, a point-to-point quantum channel $\mathcal{E}^{A'\to B}$ distributes nonlocal correlations by sending one-half of an entangled state $\rho^{AA'}$ through the channel and locally measuring the joint state $\sigma^{AB}=\id^A\otimes\mathcal{E}^{A'\to B}(\rho^{AA'})$ where $\id^X$ is the identity map on subsystem $X$.  If the channel $\mathcal{E}^{A'\to B}$ is too noisy then $\sigma^{AB}$ will only be able to generate correlations that can be simulated by an LHV model.  If this holds for \textit{every} possible input state $\rho^{AA'}$, then $\mathcal{E}^{A'\to B}$ is a called a \textit{nonlocality-breaking} channel, as originally introduced by Pal and Ghosh~\cite{25}.  Such channels are analogous to the well-studied entanglement-breaking channels, which are those that break the entanglement between sender and receiver whenever they are used to distribute a quantum state~\cite{24b}.  While every entanglement-breaking channel is necessarily nonlocality-breaking, the results of~\cite{Almeida-2007a} imply that the converse is not true.

\par
In this paper, we focus on the family of CHSH-breaking channels $\mathcal{E}$.  These are channels whose output states $\id\otimes\mathcal{E}(\rho^{AA'})$ only generate local correlations when both parties choose between a pair of dichotomic observables. In section~\ref{sec2} we present the CHSH-breaking conditions for several channels. In section~\ref{sec3} we demonstrate activation by combining two CHSH-breaking channels and analyze this phenomenon as a nonlocality distribution task in two different situations. The results are summarized and discussed in section~\ref{sec4}. 

\section{Nonlocality and CHSH-breaking channels}
\label{sec2}

Consider Hilbert spaces $\mathcal{H}^{A}$ and $\mathcal{H}^B$ whose density matrices form the sets $D(\mathcal{H}^{A})$ and $D(\mathcal{H}^B)$. Mathematically, a quantum channel is a completely-positive, trace preserving map $\mathcal{E}:D(\mathcal{H}^{A})\rightarrow D(\mathcal{H}^{B})$ from $D(\mathcal{H}^{A})$ to $D(\mathcal{H}^{B})$. A quantum channel acting on a density matrix $\rho^A$ can be expressed as follows:
\begin{equation}\label{KrausRep}
\mathcal{E}(\rho^A)=\sum_{k}E_k \rho^A E_k^{\dagger},
\end{equation}
as a result of Choi’s theorem~\cite{26}, where the operators $\{E_k\}$ are known as Kraus operators and must satisfy the trace-preserving constraint $\sum_{k}E^\dagger_k E_k = \mathbb{I}$.

A channel $\mathcal{E}^{A'\to B}$ is called nonlocality-breaking if the output state $\sigma^{AB}=\id^A\otimes\mathcal{E}^{A'\to B}(\rho^{AA'})$ is Bell local for every input $\rho^{AA'}$ state; i.e. $\sigma^{AB}$ admits a LHV model for all local measurements.  This means that for any family of positive operator-valued measures (POVMs) $\{\Pi^A_{a|x}\}_{a,x}$ and $\{\Sigma^B_{b|y}\}_{b,y}$ on system $A$ and $B$, respectively, there exist conditional distributions $p(a|x,\lambda)$ and $p(b|y,\lambda)$ and shared variable $\lambda$ such that
\begin{equation}
    \tr[(\Pi^A_{a|x}\otimes\Sigma^B_{b|y})\sigma^{AB}]=\int d\lambda p(\lambda)p(a|x,\lambda)p(b|y,\lambda).
\end{equation}x
Since the distributions $p(ab|xy)$ admitting such a decomposition forms a compact, convex set, we can characterize the distributions by a collection of confining hyperplanes.  These hyperplanes correspond to so-called Bell inequalities, and their violation in the measurement statistics $p(ab|xy)=\tr[(\Pi^A_{a|x}\otimes\Sigma^B_{b|y})\sigma^{AB}]$ indicates that $\sigma^{AB}$ is not a Bell local state \cite{Peres-1999a}.  It is not difficult to show that $\mathcal{E}^{A'\to B}$ is nonlocality-breaking if and only if $\sigma^{AB}=\id^A\otimes\mathcal{E}^{A'\to B}(\op{\varphi}{\varphi}^{AA'})$ is Bell local for all pure-state inputs $\ket{\varphi}^{AA'}$ with system $A$ having dimension equaling $A'$ \cite{25}.
\par
Nonlocality-breaking channels are a generalization of the well-studied entanglement-breaking channels \cite{Horodecki-2003a}.  The latter refers to channels $\mathcal{E}^{A'\to B}$ such that $\sigma^{AB}=\id^A\otimes\mathcal{E}^{A'\to B}(\rho^{AA'})$ is separable for every input $\rho^{AA'}$.  Since every separable state is necessarily Bell local, it follows that every entanglement-breaking channel is nonlocality-breaking.  However, the converse is not true.  In~\cite{Almeida-2007a}, an LHV model was constructed for states having the form
\begin{equation}
    \sigma^{AB}=p\op{\varphi}{\varphi}^{AB}+(1-p)\rho^A_{\varphi}\otimes\frac{\mathbb{I}^B}{d_B},
\end{equation}
which corresponds to sending $\ket{\varphi}^{AA'}$ through the partially depolarizing channel $\mathcal{E}^{A'\to B}(X)=p X+\tr[X]\frac{1}{d_B}\mathbb{I}^B$.  It is known that this channel is entanglement-breaking whenever $p\leq \frac{1}{d_B+1}$ \cite{Horodecki-1999a}.  However, the local model of~\cite{Almeida-2007a} holds for values of $p>\frac{1}{d_B+1}$.  Thus, even on the level of channels, entanglement and nonlocality emerge as distinct quantum resources.

Given the complexity in deciding whether a given bipartite state is Bell local, in this paper we restrict attention to the CHSH Inequality, which is the only Bell inequality corresponding to the scenario of binary inputs and binary outputs \cite{30}.  Recall that for any set of observables $\{M_1^A,M_2^A,N_1^B,N_2^B\}$ with spectrum $\{-1,1\}$, the CHSH Inequality says that $|\Tr{(\mathcal{B}\rho^{AB})}|\le 2$, where
\begin{equation}
\mathcal{B}=M_1^A\otimes(N_1^B+N_2^B)+M_2^A\otimes(N_1^B-N_2^B)
\end{equation}
is called the Bell operator.  As in the general case, a channel $\mathcal{E}^{A'\to B}$ is called CHSH-breaking if $\sigma^{AB}=\id^A\otimes\mathcal{E}^{A'\to B}(\rho^{AA'})$ cannot violate the CHSH Inequality for any input state $\rho^{AA'}$, and we will refer to such states as being CHSH local.  Again, for deciding whether or not a channel is CHSH-breaking, it suffices to consider pure-state inputs.



To demonstrate activation phenomenon, it suffices to consider qubit channels, i.e. $\mathcal{H}^{A'}=\mathcal{H}^{B}=\mathbb{C}^{2}$.  States of a qubit system can be expressed as $\rho=\frac{1}{2}(\mathbb{I}+\boldsymbol{v}\cdot\boldsymbol{\sigma})$ where $\boldsymbol{v}\in\mathbb{R}^3$ is called the Bloch vector of the state and $\boldsymbol{\sigma}=(\sigma_x,\sigma_y,\sigma_z)$ is the Pauli vector.  In the two-level case, a channel can be characterized as an affine transformation $\mathbb{T}$ on the Bloch vector of the input state ~\cite{27,28}.  Explicitly, we have
\begin{equation}\label{channel}
\mathbb{T}=\begin{pmatrix}
             1 & \boldsymbol{0} \\
\boldsymbol{t} & \boldsymbol{\Lambda} \end{pmatrix},
\end{equation}
so that
\begin{equation}
    \mathcal{E}\left(\frac{1}{2}\left(\mathbb{I}+\boldsymbol{v}\cdot\boldsymbol{\sigma}\right)\right)=\frac{1}{2}\left(\mathbb{I}+(\boldsymbol{t}+\boldsymbol{\Lambda}\boldsymbol{v})\cdot\boldsymbol{\sigma}\right),
\end{equation}
where $\boldsymbol{t}$ is a real $1\times 3$ vector and $\boldsymbol{\Lambda}$ is a real $3\times 3$ matrix. Furthermore, $\boldsymbol{\Lambda}$ can be diagonalized under a proper unitary map on the input and output state $U_o\circ\mathcal{E}\circ U_i$~\cite{25}.  For the purposes of deciding whether or not a channel is CHSH-breaking, the unitaries $U_{i}$ and $U_o$ above can be absorbed by the state and measurements in the CHSH Inequality respectively.  Hence, the CHSH-breaking conditions only need to be derived for diagonal $\boldsymbol{\Lambda}$.
\par 
A special class of quantum channels called unital channels have the property that $\mathcal{E}(\id)=\id$. The following lemma says that for unital qubit channels, one only needs to consider a maximally entangled input to determine if it is CHSH-breaking.
\begin{lemma}
\label{UnitalThm}
A unital channel $\mathcal{E}$ is CHSH-breaking if and only if its output state $\id \otimes\mathcal{E}(\op{\Phi^{+}}{\Phi^{+}})$ does not violate the CHSH Inequality, where $\ket{\Phi^{+}}=\frac{1}{\sqrt{2}}(\ket{00}+\ket{11})$ is the maximally entangled state.
\end{lemma}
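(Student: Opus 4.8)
The plan is to establish the nontrivial direction: assuming $\sigma_{\Phi^+} := \id\otimes\mathcal{E}(\op{\Phi^+}{\Phi^+})$ is CHSH local, show that $\id\otimes\mathcal{E}(\op{\varphi}{\varphi})$ is CHSH local for \emph{every} pure input $\ket{\varphi}^{AA'}$ (the converse is immediate: if $\mathcal{E}$ is CHSH-breaking then in particular the maximally entangled input does not violate CHSH). The main tool will be the Horodecki CHSH criterion: for a two-qubit state $\rho$ with correlation matrix $T_{ij} = \tr[\rho\,\sigma_i\otimes\sigma_j]$ (where $\sigma_1,\sigma_2,\sigma_3$ are the Pauli matrices), writing $u_1 \ge u_2 \ge u_3 \ge 0$ for the eigenvalues of $T^{\mathsf T}T$ and $M(\rho) := u_1 + u_2$, one has that $\rho$ violates CHSH if and only if $M(\rho) > 1$. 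Because it suffices to test pure inputs with $\dim\mathcal{H}^A = \dim\mathcal{H}^{A'} = 2$, and because (by the remark preceding the lemma, whose diagonalization step preserves unitality) we may assume $\boldsymbol\Lambda = \mathrm{diag}(\lambda_1,\lambda_2,\lambda_3)$, the whole claim reduces to the inequality $M\big(\id\otimes\mathcal{E}(\op\varphi\varphi)\big) \le M(\sigma_{\Phi^+})$ for all pure $\ket\varphi$.

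The first step is to reduce an arbitrary pure input to Schmidt form. Writing $\ket\varphi = (U^A\otimes V^{A'})\ket{\psi_\theta}$ with $\ket{\psi_\theta} = \cos\theta\ket{00} + \sin\theta\ket{11}$, I would peel the factor $U^A$ out of the channel (it commutes past $\mathcal{E}$) and absorb it into a local unitary that does not affect the CHSH value, leaving $\sigma_\theta := \id\otimes\mathcal{E}_V(\op{\psi_\theta}{\psi_\theta})$ with $\mathcal{E}_V(X) := \mathcal{E}(VXV^\dagger)$. The key observation is that $\mathcal{E}_V$ is still a \emph{unital} qubit channel, since $V\mathbb{I}V^\dagger = \mathbb{I}$; its Bloch block is $N := \boldsymbol\Lambda R_V$ for some $R_V \in SO(3)$, and hence $NN^{\mathsf T} = \boldsymbol\Lambda^2$ remains diagonal even though $N$ itself need not be.

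Next I would compute $T_{\sigma_\theta}^{\mathsf T}T_{\sigma_\theta}$ and bound $M(\sigma_\theta)$. Since $\op{\psi_\theta}{\psi_\theta}$ has correlation matrix $\mathrm{diag}(\sin2\theta,-\sin2\theta,1)$, applying $\id\otimes\mathcal{E}_V$ multiplies it on the right by $N^{\mathsf T}$, giving $T_{\sigma_\theta}^{\mathsf T}T_{\sigma_\theta} = N\,\mathrm{diag}(\sin^2 2\theta,\sin^2 2\theta,1)\,N^{\mathsf T}$. Using the splitting $\mathrm{diag}(\sin^2 2\theta,\sin^2 2\theta,1) = \sin^2 2\theta\,\mathbb{I} + \cos^2 2\theta\,\boldsymbol e_3\boldsymbol e_3^{\mathsf T}$ together with $NN^{\mathsf T} = \boldsymbol\Lambda^2$, this becomes $T_{\sigma_\theta}^{\mathsf T}T_{\sigma_\theta} = \sin^2 2\theta\,\boldsymbol\Lambda^2 + \cos^2 2\theta\,\boldsymbol w\boldsymbol w^{\mathsf T}$, where $\boldsymbol w := N\boldsymbol e_3 = \boldsymbol\Lambda R_V\boldsymbol e_3$ satisfies $\|\boldsymbol w\|^2 = \sum_i\lambda_i^2 u_i^2$ for the unit vector $\boldsymbol u := R_V\boldsymbol e_3$. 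Now I would invoke the Ky Fan maximum principle — the sum of the two largest eigenvalues of a symmetric matrix $S$ equals $\max_\Pi\tr(\Pi S)$ over rank-$2$ orthogonal projections $\Pi$. For each such $\Pi$, $\tr(\Pi\boldsymbol\Lambda^2) \le M_0$ and $\boldsymbol w^{\mathsf T}\Pi\boldsymbol w \le \|\boldsymbol w\|^2 \le \max_i\lambda_i^2 \le M_0$, where $M_0$ is the sum of the two largest of $\lambda_1^2,\lambda_2^2,\lambda_3^2$; hence $M(\sigma_\theta) \le (\sin^2 2\theta + \cos^2 2\theta)M_0 = M_0$. Since $\sigma_{\Phi^+}$ has correlation matrix $\mathrm{diag}(\lambda_1,-\lambda_2,\lambda_3)$ one has $M(\sigma_{\Phi^+}) = M_0$, so $M(\id\otimes\mathcal{E}(\op\varphi\varphi)) \le M(\sigma_{\Phi^+})$, and if the right side is $\le 1$ then so is the left, i.e.\ $\mathcal{E}$ is CHSH-breaking.

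The Bloch/Pauli bookkeeping is routine; the one genuine idea is isolating the multiple-of-identity piece $\sin^2 2\theta\,\mathbb{I}$ inside $\mathrm{diag}(\sin^2 2\theta,\sin^2 2\theta,1)$, which commutes with the rotation $R_V$ and thereby leaves only a rank-one perturbation of $\boldsymbol\Lambda^2$ to control — a task perfectly suited to Ky Fan. The point I would be most careful about is the reduction to Schmidt form: it must be checked that $\mathcal{E}_V$ stays unital and that $NN^{\mathsf T}$ stays equal to $\boldsymbol\Lambda^2$, both of which hold precisely because the extra factor $R_V$ introduced by $V$ is orthogonal, since without these two facts the clean cancellation in the final bound would break down.
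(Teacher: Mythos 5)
Your proof is correct, but it takes a genuinely different route from the paper's. The paper never touches the Horodecki criterion for this lemma: it works in the Heisenberg picture, pulling the channel onto the Bell operator via the adjoint map, $\Tr[\mathcal{B}\,(\id\otimes\mathcal{E})(\op{\psi}{\psi})]=\Tr[\op{\psi}{\psi}\,(\id\otimes\mathcal{E}^{\dagger})(\mathcal{B})]$, observing that unitality of $\mathcal{E}$ forces $(\id\otimes\mathcal{E}^{\dagger})(\mathcal{B})$ to remain a pure correlation operator (no local terms) with correlation matrix of rank at most $2$, and then proving a standalone proposition — via the parametrization $\ket{\psi}=(R\otimes\mathbb{I})\ket{\Phi^{+}}$ and Cauchy--Schwarz — that any such operator has its expectation value maximized by a maximally entangled state. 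Your argument instead stays in the Schr\"odinger picture: you reduce the input to Schmidt form, absorb the Schmidt unitary $V$ into the channel while checking that $NN^{\mathsf T}=\boldsymbol{\Lambda}^2$ survives, split $T^{\mathsf T}T$ into a multiple of $\boldsymbol{\Lambda}^2$ plus a rank-one perturbation, and control the sum of the two largest eigenvalues with Ky Fan. Both proofs are sound; the paper's is shorter and isolates a reusable structural fact (maximal entanglement optimizes any rank-$\le 2$ correlation operator, which unital adjoints preserve), while yours yields the sharper quantitative statement $M\bigl(\id\otimes\mathcal{E}(\op{\varphi}{\varphi})\bigr)\le M\bigl(\id\otimes\mathcal{E}(\op{\Phi^{+}}{\Phi^{+}})\bigr)$ for every pure input, and in fact does not even require $\boldsymbol{\Lambda}$ to be diagonalized first, since only $\boldsymbol{\Lambda}\boldsymbol{\Lambda}^{\mathsf T}$ enters your bound. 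Both arguments rely on the same external ingredient, namely the reduction to pure-state inputs of matching dimension cited from Pal and Ghosh.
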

\par 
\noindent A proof of Lemma~\ref{UnitalThm} is provided in \ref{UnitalThmProof}.  For a unital channel represented by the affine transformation $\mbb{T}$ in the Bloch vector picture, it can be readily seen that $\boldsymbol{t}=\boldsymbol{0}$ and the diagonalized $\Lambda$ is parametrized by three real parameters $\{\lambda_1,\lambda_2,\lambda_3\}$.  It has been previously shown by Pal and Ghosh in~\cite{25} that $\id \otimes\mathcal{E}(\op{\Phi^{+}}{\Phi^{+}})$ cannot violate the CHSH Inequality for a unital channel $\mc{E}$ if and only if
\begin{equation}
\lambda_1^2+\lambda_2^2\le 1,\label{unital_break}
\end{equation}
assuming $|\lambda_1|\ge|\lambda_2|\ge|\lambda_3|$. Combining with Lemma~\ref{UnitalThm}, we thus conclude that Eq. ~(\ref{unital_break}) provides the CHSH-breaking condition for any unital qubit channel.
\par
For a general non-unital channel, we know of no analytical criteria for determining whether or not it is CHSH-breaking, and one typically obtains results by numerically searching over all input states and measurement settings~\cite{25}. Nevertheless, in~\ref{proof:amplitude} and~\ref{proof:erasure}, we provide analytical conditions for when special classes of non-unital channels are CHSH-breaking. 
\par
We now summarize our results of the CHSH-breaking conditions for certain families of qubit channels.
\par
\begin{itemize}
\item \textit{Depolarizing channel $\mathcal{E}_{d,p}$:} The depolarizing channel $\mathcal{E}_{d,p}$ perfectly transmits its input with probability $p$; with probability $1-p$ it throws away its input and outputs a completely mixed state.  On a two-qubit state, $\mathcal{E}_{d,p}$ acts as:
    \begin{equation}\label{DepolDescription}
      \id^A\otimes\mathcal{E}_{d,p}^{A'\to B}(\rho^{AA'})=p\rho^{AB}+ (1-p)\rho^A\otimes\frac{\mathbb{I}^B}{2}
    \end{equation}
    where $\rho^A=\Tr_{A'}(\rho^{AA'})$.  Kraus operators for this map are easily seen to be
    \begin{equation}\label{DepolOps}
    E_1=\sqrt{\frac{1+3p}{4}}\mathbb{I}, \quad E_i=\sqrt{\frac{1-p}{4}}\sigma_{i-1}\quad (2\leq i\leq 4).
    \end{equation}
     From Lemma \ref{UnitalThm}, it follows that $\mathcal{E}_{d,p}$ is CHSH-breaking if and only if $p\leq \frac{1}{\sqrt{2}}$.
\par
\item \textit{Amplitude damping channel  $\mathcal{E}_{a,p}$:} The amplitude damping channel $\mathcal{E}_{a,p}$ shrinks the $x$ and $y$ components of an input Bloch vector by a factor $\sqrt{p}$ while driving the $z$ component toward $+1$.  It has Kraus operator
    \begin{equation}\label{AmpDampOps}
   E_1=\op{0}{0}+\sqrt{p}\op{1}{1}, \quad E_2=\sqrt{1-p}\op{0}{1}.
    \end{equation}
    Note that this is a non-unital channel, and it is CHSH-breaking for $p\le\frac{1}{2}$, as calculated in \ref{proof:amplitude}.
\par
\item \textit{Loss channel $\mathcal{E}_{l,p}$:} The loss channel $\mathcal{E}_{l,p}$ perfectly transmits its input with probability $p$ whereas with probability $1-p$ it throws away its input and outputs the state $\ket{0}$.  On a two-qubit state, $\mathcal{E}_{l,p}$ acts as:
    \begin{equation}\label{LossDescription}
        \id^A\otimes\mathcal{E}_{l,p}^{A'\to B}(\rho^{AA'})=p\rho^{AB}+ (1-p)\rho^A\otimes\op{0}{0}^B,
    \end{equation}
and it has Kraus operators
   \begin{equation}\label{LossOps}
    E_1=\sqrt{p}\mathbb{I}, \quad E_2=\sqrt{\frac{1-p}{2}}\op{0}{1},\quad E_3=\sqrt{\frac{1-p}{2}}\op{1}{1}.
    \end{equation}
    This is a non-unital channel with a CHSH-breaking condition $p\le\frac{\sqrt{5}-1}{2}$, as shown in~\ref{proof:amplitude}.
\par
\item \textit{Erasure channel $\mathcal{E}_{e,p}$:}  The erasure channel $\mathcal{E}_{e,p}$ perfectly transmits its input with probability $p$; with probability $1-p$ it throws away its input and outputs a flag state $\ket{e}$, which is orthogonal to both $\ket{0}$ and $\ket{1}$.     On a two-qubit state, $\mathcal{E}_{e,p}$ acts as:
    \begin{equation}\label{erasureDescription}
       \id^A\otimes\mathcal{E}_{e,p}^{A'\to B}(\rho^{AA'})=p\rho^{AB}+ (1-p)\rho^A\otimes\op{e}{e}^B,
    \end{equation} 
and it has Kraus operators
    \begin{equation}\label{ErasureOps}
    E_1=\sqrt{p}\mathbb{I}, \quad E_2 = \sqrt{1-p}|e\rangle\langle 0|, \quad E_3=\sqrt{1-p}|e\rangle\langle 1|.
    \end{equation}
   The CHSH-breaking conditions is $p\le\frac{1}{2}$, as calculated in \ref{proof:erasure}.
\end{itemize}
\begin{figure}[th]
\centering
\subcaptionbox{\label{fig:OriProtDiag}}{\includegraphics[width=0.45\textwidth]{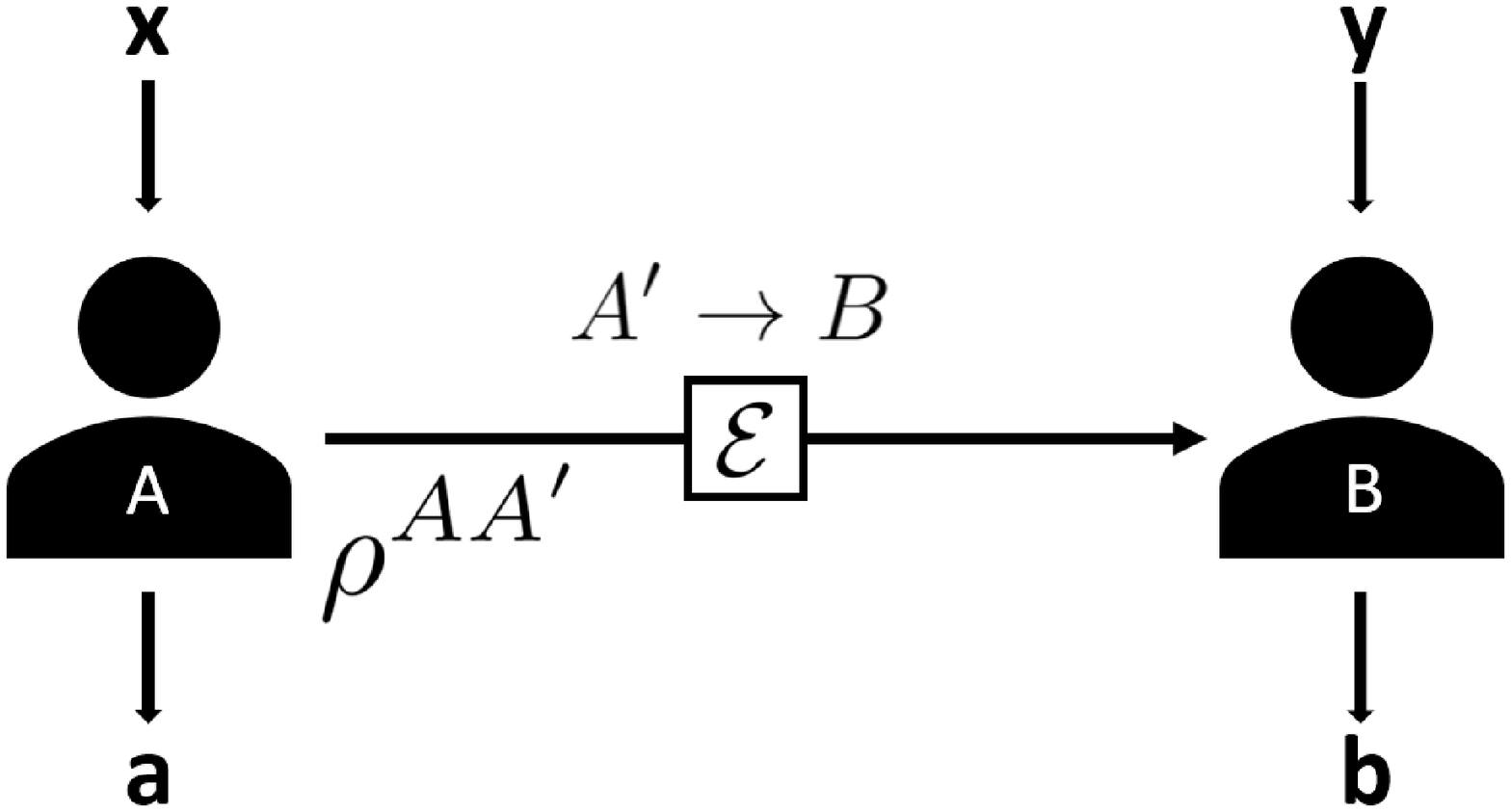}}%
\hfill

\subcaptionbox{\label{fig:UniProtDiag}}{\includegraphics[width=0.45\textwidth]{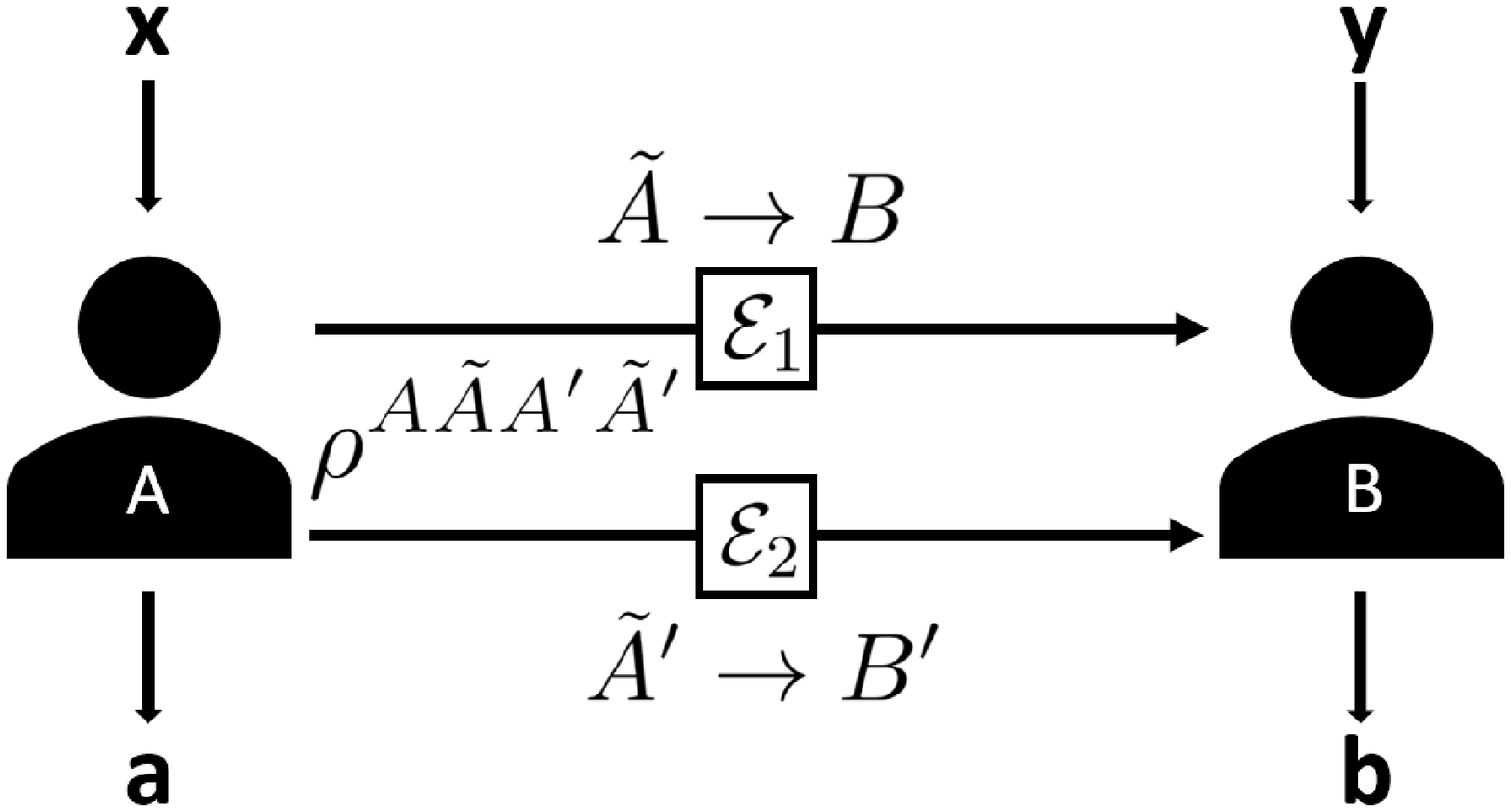}}%
\hfill
\subcaptionbox{\label{fig:BiProtDiag}}{\includegraphics[width=0.45\textwidth]{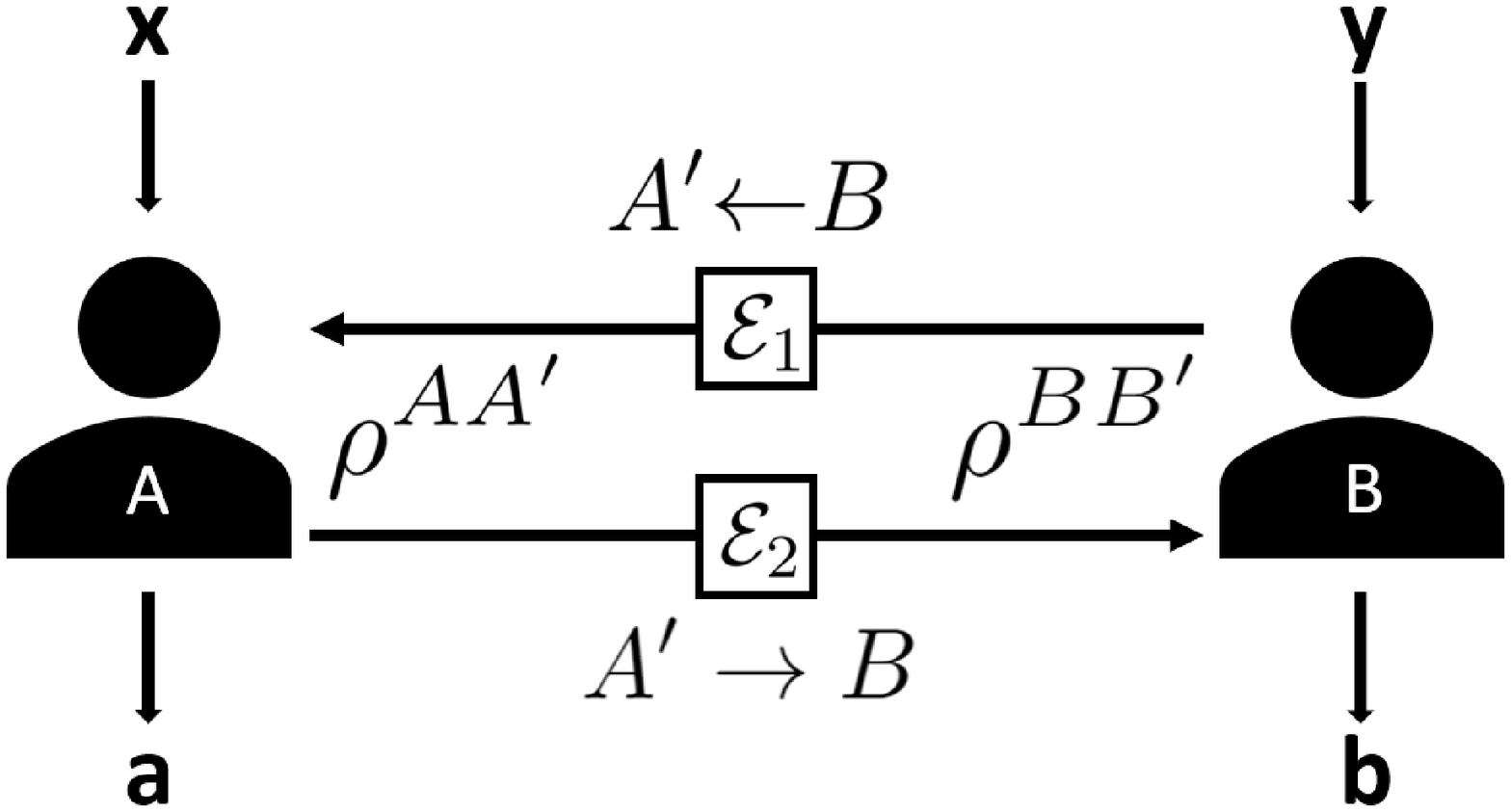}}%
\hfill
\par
\caption{(a) Original protocol: Alexis sends one particle to Bobby via a CHSH-breaking channel, no nonlocality can be retrieved under a two-setting-two-outcome scenario; (b) Unidirectional protocol: Alexis sends two particles to Bobby via two CHSH-breaking channels;  (c) Bidirectional protocol: Bobby sends one particle to Alexis, and Alexis sends one particle to Bobby via two CHSH-breaking channels.}
\end{figure}


\section{Channel Activation of CHSH nonlocality}
\label{sec3}
In \cite{18} it was shown that there exist two CHSH local states ${\rho}^{AB}$ and ${\rho}^{A^{'}B^{'}}$ such that ${\rho}^{AB}\otimes{\rho}^{A^{'}B^{'}}$ can violate the CHSH Inequality. We now show a similar nonlocality activation from the perspective of quantum channels.  Unlike quantum states, which are static resources, quantum channels are dynamical resources that can be used in different ways.  Two specific scenarios are discussed in this work. 
\par 
\noindent \textbf{Unidirectional protocol:}
Figure~\ref{fig:UniProtDiag} depicts a unidirectional way for using two quantum channels to distribute nonlocality between Alexis and Bobby. Alexis locally prepares a four-qubit state $\rho^{A\tilde{A}A'\tilde{A}'}$ and sends systems $\tilde{A}$ and $\tilde{A}'$ through two different qubit channels, $\mc{E}_1^{\tilde{A}\to B}$ and $\mc{E}_2^{\tilde{A}'\to B'}$.  This produces the state
\begin{equation}
\sigma^{ABA'B'}=\id\otimes\mathcal{E}^{\tilde{A}\to B}_1\otimes\id\otimes\mathcal{E}^{\tilde{A}'\to B'}_2(\rho^{A\tilde{A}A'\tilde{A}'}).
\label{4-state}
\end{equation}
CHSH activation is achieved if this state can violate the CHSH Inequality along the $AA':BB'$ cut when $\mc{E}_1$ and $\mc{E}_2$ are both CHSH-breaking channels.
\par
\noindent \textbf{Bidirectional protocol:}
Figure~\ref{fig:BiProtDiag} shows a bidirectional method for distributing nonlocality between Alexis and Bobby. Unlike a unidirectional protocol, both Alexis and Bobby locally prepare a two-qubit state, and they each send one of their qubits to the other party through a channel. The final shared state will have the form
\begin{equation}
\sigma^{AB}\otimes\sigma^{A'B'}=\id\otimes\mathcal{E}^{{A}'\to B}_1\otimes\mathcal{E}^{B\to A'}_2\otimes\id(\rho^{AA'}\otimes\rho^{BB'}).
\label{2-state}
\end{equation}
Again, CHSH activation is achieved if this state can violate the CHSH Inequality along the $AA':BB'$ cut when $\mc{E}_1$ and $\mc{E}_2$ are both CHSH-breaking channels.
\par
In both of these scenarios,  CHSH nonlocality of the output state can be detected using the Bell operator:
\begin{equation}
\mathcal{B}=M_1^{AA'}\otimes(N^{BB'}_1+N^{BB'}_2)+M_2^{AA'}\otimes (N^{BB'}_1-N^{BB'}_2),
\label{Bell2}
\end{equation}
where $\{M^{AA'}_1,M^{AA'}_2,N^{BB'}_1,N^{BB'}_2\}$ are joint observables for party $AA'$ and party $BB'$.  The observables have spectrum $\{-1,1\}$, and the CHSH Inequality says ${\Tr{(\mathcal{B}\sigma^{AA':BB'})}\le 2}$.  We next describe our general approach for optimizing the value of $\Tr{(\mathcal{B}\sigma^{AA':BB'})}$.

\subsection{The See-Saw Optimization Algorithm}
The maximum CHSH-value for given channels $\mathcal{E}_{1}\otimes\mc{E}_2$ can be obtained by maximizing observables $\{M_x,N_y\}$ and input states $\rho$ using algorithms like the ones in~\cite{18,31}.   For example, consider a unidirectional protocol using channels $\mc{E}_1\otimes\mc{E}_2$.  For a fixed input state $\rho$ and observable choices $N_y$ on Bobby's side, one can define
\begin{equation*}
\begin{split}
F_{1}&=\Tr_{BB'}([\mathbb{I}^{AA'}\otimes(N_1+ N_2)]\sigma^{ABA'B'}),\\
F_{2}&=\Tr_{BB'}([\mathbb{I}^{AA'}\otimes(N_1- N_2)]\sigma^{ABA'B'}),
\end{split}
\end{equation*}
where $\sigma^{ABA'B'}$ is given by Eq.~(\ref{4-state}). As shown in~\cite{18}, the optimal observables $M_x$ for Alexis are given by $M_x=\sum_{i}\sgn(\lambda_{x,i})\op{e_{x,i}}{e_{x,i}}$ where $F_x=\sum_i\lambda_{x,i}\op{e_{x,i}}{e_{x,i}}$.  Optimal observables $N_y$ can be likewise obtained for a fixed input state $\rho$ and observables $M_x$ on Alexis's side.  Finally, for fixed local observables $\{M_x,N_y\}$ (and hence fixed Bell operator $\mc{B}$), the optimal input state can be determined by observing that
\begin{align}
    \tr(\mc{B}\sigma^{AA':BB'})&=\tr\left[\mc{B}(\mathcal{E}^{\tilde{A}\to B}_{1}\otimes\mc{E}^{\tilde{A}'\to B'}_2)[\rho^{A\tilde{A}A'\tilde{A}'}]\right]\notag\\
    &=\tr\left[\rho^{A\tilde{A}A'\tilde{A}'}(\mathcal{E}^{\dagger}_{1}\otimes\mc{E}^{\dagger}_2)[\mc{B}]\right],
\end{align}
where $\mc{E}_i^\dagger$ denotes the adjoint CP map of $\mc{E}_i$.  Thus we can maximize $\tr(\mc{B}\sigma^{AA':BB'})$ by choosing the input state to be the outer product of the eigenstate associated with the largest eigenvalue of $\mathcal{E}^{\dagger}_{1}\otimes\mc{E}^{\dagger}_2[\mc{B}]$.  

We have just observed how for any two of the three problem variables $\{M_x,N_y,\rho\}$, we can always choose the third so as to maximize the value of $\tr(\mc{B}\sigma^{AA':BB'})$.  This suggests we perform a ``see-saw'' algorithm in which we alternate optimizing $\tr(\mc{B}\sigma^{AA':BB'})$ over one of the three variables \cite{18}.  For channels $\mc{E}_1\otimes\mc{E}_2$, the algorithm is then as follows:
\begin{enumerate}
    \item Randomly initialize $M^x$, $N^y$, and $\rho$.
    \item Update each of the variables $\{M_x,N_y,\rho\}$ in the order $M^x\rightarrow\rho\rightarrow N^y\rightarrow\rho$ repeatedly by applying the optimization procedures described above. 
    \item In step 2, it is possible that after reaching a local maxima in the CHSH-value, the algorithm isn't able to improve on the CHSH-value. When the same CHSH-value is reached, say, 10 times after consecutive updates of $M_x,N_y$, and $\rho$, we suddenly change the state to $\rho\rightarrow (1-\varepsilon)\rho+\varepsilon\rho^*$ with certain probability $\varepsilon$ where $\rho^*$ is  some previously fixed state. Meanwhile, we also store the maxima that was reached before the sudden change.
    \item Steps 2 and 3 are repeated until the same maximal is reached repeatedly. 
\end{enumerate}

Although we tried different states $\rho^{*}$ for each channel, we highlight that using a random entangled pure state sufficed. To do this, we picked $\rho^{*} = U^{\dagger}\ket{\lambda}\bra{\lambda}U$ where $U$ and $\lambda\in (0,1)$ are a random unitary and real parameter, and $\ket{\lambda} = \sqrt{\lambda}\ket{10}+\sqrt{1-\lambda}\ket{10}$.

\subsection{Results}

For both protocols, we are able to see activation for several combinations of quantum channels.  Some of the activation results can be found in Table~\ref{ActResults}, and the full numerical code we used can be found at \cite{Code}.  Our largest violation is obtained for two amplitude damping channels in a bidirectional protocol, in which we obtain a CHSH value of $2.011\,91$.  
\begin{table}[h]
 \caption{Activation results for both protocols. All of these violations are calculated in the case where the channels are CHSH-breaking.}
\footnotesize
\label{ActResults}
\centering
\begin{tabular}{llll}
\br
\textrm{Protocol}&
\textrm{Channel 1}&
\textrm{Channel 2}&
\textrm{Maximum violation}\\
\mr
\multirow{2}{*}{{Unidirectional}} & $\mathcal{E}_{a,p=1/2}$ & $\mathcal{E}_{d,p=1/\sqrt{2}} $  & $2.005\,41$     \\ 
                            &  $\mathcal{E}_{e,p=1/2}$ & $\mathcal{E}_{d,p=1/\sqrt{2}} $      & $2.004\,84$      \\ 
                            & & \\
\multirow{4}{*}{{Bidirectional}} &  $\mathcal{E}_{a,p=1/2}$ & $\mathcal{E}_{a,p=1/2} $         & $2.011\,91$     \\ 
                                            &  $\mathcal{E}_{e,p=1/2}$ & $\mathcal{E}_{e,p=1/2} $              & $2.001\,64$\\ 
                                            & $\mathcal{E}_{a,p=1/2}$ & $\mathcal{E}_{l,p=(\sqrt{5}-1)/2} $ &$2.002\,11$ \\
                                            & $\mathcal{E}_{a,p=1/2}$ & $\mathcal{E}_{l,p=1/2} $ &$2.000\,31$ \\
\br
\end{tabular}
\end{table}
Notice that every instance of channel activation in a bidirectional protocol provides an instance of state activation.  Indeed, if $\mc{E}_1^{A'\to B}$ and $\mc{E}_2^{B\to A'}$ are CHSH-breaking states such that $\id\otimes\mc{E}_1^{A'\to B}\otimes\mc{E}_2^{B\to A'}\otimes\id(\rho^{AA'}\otimes\rho^{BB'})$ violates the CHSH Inequality, then the states $\sigma_1^{AB}=\id\otimes\mathcal{E}^{A'\to B}_1(\rho^{AA'})$ and $\sigma_2^{A'B'}=\mathcal{E}^{B\to A'}_2\otimes\id(\rho^{BB'})$ are both CHSH local, but they can be activated when put together.  

From the aforementioned result of the amplitude damping channel in the bidirectional protocol, we thus have a state CHSH activation of $2.011\,91$.  In comparison, the maximal activation value found by Navascu\'{e}s and V\'{e}rtesi~\cite{18} is $2.023\,24$, and it was obtained by considering a subset of CHSH-local states that are 2-extendable. In our analysis, we do not have a greater violation, yet our method has led to finding state activation results for some symmetric states which were not found in~\cite{18}.  Specifically, for two CHSH-breaking amplitude damping channels, we have found a violation of $2.011\,72$ using CHSH-local states $\sigma_1^{AB}$ and $\sigma_2^{AB}$ satisfying the condition $\sigma_1^{AB}=\mbb{F}_{AB}\sigma_2^{AB}\mbb{F}_{AB}$, where $\mbb{F}_{AB}$ is the SWAP operator between Alexis and Bobby's systems.  This violation is only slightly smaller than our maximal violation, and it can be very useful for demonstrating super-activation of a symmetric CHSH-local state.  Namely, by introducing ancillary qubits, we define a new state of the form:
\begin{equation}
\tilde{\sigma}^{abAB}=\frac{1}{2}(\op{0}{0}^a\otimes\op{1}{1}^b\otimes{\sigma}_1^{AB}+\op{1}{1}^a\otimes\op{0}{0}^b\otimes \sigma_2^{AB}),
\end{equation}
By construction, this state is $aA\leftrightarrow bB$ symmetric, and it does not violate the CHSH Inequality since $\sigma_1$ and $\sigma_2$ are CHSH-local. On the other hand, a CHSH super-activation can be easily shown (i.e. $(\tilde{\sigma}^{abAB})^{\otimes 2}$ violates the CHSH Inequality) according to the following scheme: Alexis and Bobby both perform measurements on their ancilla qubits in the $\{\ket{0},\ket{1}\}$ basis;
\begin{itemize}
    \item If both ancilla qubits are in $\ket{0}$ or $\ket{1}$, Alexis (Bobby) measures on the $A_1A_2$~($B_1B_2$) subsystem with $\mathbb{I}^{A_1A_2}$ ($\mathbb{I}^{B_1B_2}$);
    \item If the measurement result is $\ket{0}^{a_1}\otimes\ket{1}^{a_2}$ ($\ket{1}^{b_1}\otimes\ket{0}^{b_2}$), Alexis (Bobby) measures on the $A_1A_2$~($B_1B_2)$ subsystem with $M_x^{A_1A_2}$ ($N_y^{B_2B_2}$);
    \item If the measurement result is $\ket{1}^{a_1}\otimes\ket{0}^{a_2}$  ($\ket{0}^{a_1}\otimes\ket{1}^{b_2})$, Alexis (Bobby) measures on the $A_1A_2$~($B_1B_2)$ subsystem with $\mbb{F}_{A_1A_2}M_x^{A_1A_2}\mbb{F}_{A_1A_2}$ ($\mbb{F}_{B_1B_2}N_y^{B_1B_2}\mbb{F}_{B_1B_2}$),
\end{itemize}
where $\{M_x,N_y\}$ is the optimal measurement strategy for the state $\sigma^{A_1B_1}_1\otimes\sigma^{A_2B_2}_2$.  Clearly, a CHSH violation of $(2\times2.011\,72+2\times2)/4=2.005\,86$ can be obtained. Hence, we have a four-qubit state $\tilde{\sigma}^{abAB}$ that is CHSH-local and symmetric, yet it can be super-activated.  By comparison, this scheme uses fewer ancillary qubits than the analogous example presented in~\cite{18}.
\par
As another remark, we have obtained a fairly robust activation result with a CHSH violation of $2.000\,31$ for the loss channel at $p=1/2$; however the channel becomes CHSH breaking for $p<(\sqrt{5}-1)/2~$. It is interesting to place this result alongside the findings of Ref.~\cite{hidden}, where the state  $\sigma^{AB}=\frac{1}{2}\op{\Psi^{-}}{\Psi^{-}}+\frac{1}{4}{\mathbb{I}}\otimes\op{0}{0}$ has been shown to be Bell local under all projective measurements. That is, the loss channel $\mathcal{E}_{l,p}$ cannot violate \textit{any} Bell Inequality at $p=1/2$ using projective measurements when the maximally entangled state $\op{\Psi^{-}}{\Psi^{-}}$ is the input state. Whether the channel is nonlocality breaking at $p=1/2$ for all inputs is unknown; yet, if this were true, the CHSH violation here would demonstrate channel activation of general nonlocality (rather than just CHSH nonlocality) under projective measurements. We leave this as an important open problem to resolve.
\begin{figure}[h]
\centering
\includegraphics[width=0.7\textwidth]{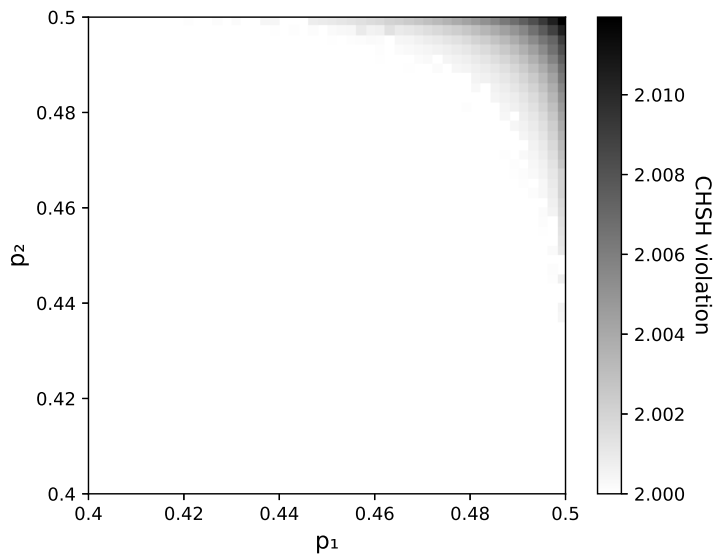}
\caption{Channel activation of CHSH nonlocality with two amplitude damping channels $\mathcal{E}_{a,p_1}$ and $\mathcal{E}_{a,p_2}$. The maximal CHSH violation of $2.011\,91$ is obtained at $p_1=p_2=0.5$, where both channels just become CHSH-breaking.}
\label{fig:robustness}
\end{figure}
\subsection{Robustness of Activation}
Most of our activation results can be obtained even when the channel parameters are larger than the critical values at which they become CHSH-breaking. This highlights a certain robustness to noise in our activation results, which is crucial for experimentally demonstrating activation of CHSH-breaking channels.  Additionally, the mapping of the CHSH-activation region shows that the see-saw algorithm indeed converges for a large number of points. 
\par
To illustrate this activation robustness to noise, we take the case of two amplitude damping channels in the bidirectional scenario as an example. By tuning the noise parameter $p_{1(2)}$ for the two amplitude damping channels, we calculate the CHSH violation in the region $p_{1(2)}\in[0.4,0.5]$ with an interval of 0.002, as shown in Figure \ref{fig:robustness}. Since the amplitude damping channel is CHSH-breaking when $p\le 0.5$, the CHSH violation here signifies a valid CHSH activation, and it is seen to persist in a small region around $p_{1(2)}=0.5$. This in turn, confirms and supports our activation result in Table \ref{ActResults}. However, when adding more noise, there is no guarantee that our algorithm will continue to converge. As can be seen from Figure \ref{fig:robustness}, some points indicating no activation are surrounded by points of activation.  By a convexity argument, these points of ``no activation'' can indeed be activated; however, the figure merely indicates that the see-saw algorithm failed to converge there.   
\section{Conclusion}
\label{sec4}
In this paper, we have investigated different qubit channels that prohibit the distribution of CHSH-violating quantum states.  On their own, such channels are useless for any quantum information task that involves CHSH nonlocality.  We provide simple criteria for determining the CHSH-breaking condition for all unital channels, and we perform analytical calculations that determine the CHSH-breaking conditions for some special channels that naturally arise in scenarios with experimental noise.  Our main result is that two CHSH-breaking channels may no longer be CHSH-breaking when used in parallel.  We demonstrate this result in two different scenarios.  Interestingly, in the bidirectional scenario, our activation and super-activation results do not use states that are entangled across both input systems.  This is in contrast to other channel activation results in which the input states are entangled.

 The activation findings presented here show that certain channels can be used to distribute nonlocality over long distances only when used in parallel with other channels. These results are particularly useful in instances of long-distance and noisy quantum communication.  However, it remains an open question as to whether there exist complete nonlocality-breaking channels that can nevertheless be activated.  This requires producing local models for all channel outputs and testing the breakage of all Bell Inequalities.   
\appendix

\ack
We thank Kai Shinbrough, Xinan Chen, Nicholas Laracuente, Marius Junge, and Graeme Smith for useful discussion. This work was supported by the National Science Foundation Award No. 1839177.
\appendix
\section{Proof of Lemma~\ref{UnitalThm}: CHSH-Breaking Condition for all Unital Channels}\label{UnitalThmProof}
For two qubits, the CHSH operator can be written as a correlation operator, which is some linear operator lying in the linear span of $\{\sigma_i\otimes\sigma_j\}_{i,j=1}^3$.  Explicitly, we have
\begin{equation}
    \mathcal{B}=\boldsymbol{a}_1\cdot\boldsymbol{\sigma}\otimes(\boldsymbol{b}_1+\boldsymbol{b}_2)\cdot\boldsymbol{\sigma}+\boldsymbol{a}_2\cdot\boldsymbol{\sigma}\otimes(\boldsymbol{b}_1-\boldsymbol{b}_2)\cdot\boldsymbol{\sigma},
\end{equation}
where $\boldsymbol{a}_i,\boldsymbol{b}_j$ are spin directions for Alexis and Bobby's measurements and the  corresponding correlation matrix is given as $B=\boldsymbol{a}_1\otimes(\boldsymbol{b}_1+\boldsymbol{b}_2)+\boldsymbol{a}_2\otimes(\boldsymbol{b}_1-\boldsymbol{b}_2)$, with rank not greater than two, which indicates that we can always diagonalize it as:
\begin{equation}
 \mathcal{B}=d_1\sigma_1\otimes\sigma_1+d_3\sigma_3\otimes\sigma _3.
\end{equation}
The Proof of Theorem 1 then follows from the following proposition.
\par
\begin{proposition}
\label{Prop:full-correlation-max}
Let $T=\sum_{i,j=1}^3t_{i,j}\sigma_i\otimes \sigma_j$ be any  correlation operator having correlation matrix $t_{ij}$ with rank less than 3. Then the expectation value {\upshape $\Tr(T\rho)$} is maximized by a maximally entangled state. 
\end{proposition}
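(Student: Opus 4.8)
The plan is to reduce the problem to maximizing over pure states and then exploit the tensor-product structure of the Pauli basis together with the rank-deficiency of $t_{ij}$. First I would note that $\Tr(T\rho)$ is linear in $\rho$ over the convex set of density matrices, so the maximum is attained at an extreme point, i.e. a pure state $\ket{\psi}^{AB}$. Writing $\ket{\psi}$ in its Schmidt decomposition $\ket{\psi}=\cos\theta\,\ket{00}+\sin\theta\,\ket{11}$ (after absorbing local unitaries into the $\sigma_i\otimes\sigma_j$ basis, which only rotates the correlation matrix $t_{ij}$ by $O\in SO(3)$ on each side without changing its rank), the quantity $\Tr(T\op{\psi}{\psi})$ becomes a function of $\theta$ and of the rotated correlation matrix. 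The key observation is that for the maximally entangled state ($\theta=\pi/4$) one has $\bra{\Phi^+}\sigma_i\otimes\sigma_j\ket{\Phi^+}=\delta_{ij}\epsilon_i$ with $\epsilon_i=\pm 1$ (specifically $+1,-1,+1$ for the real Pauli matrices in the standard convention), so $\Tr(T\op{\Phi^+}{\Phi^+})=\sum_i \epsilon_i t_{ii}$, and by a suitable choice of local unitary rotation this equals the sum of the singular values of the $3\times 3$ matrix $(t_{ij})$ — but only because we have three independent directions to align.

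Next I would write out $\Tr(T\op{\psi}{\psi})$ for general $\theta$. Using $\bra{\psi}\sigma_i\otimes\sigma_j\ket{\psi}$, the only surviving contributions come from the pattern of matrix elements of the Pauli operators on $\{\ket{0},\ket{1}\}$: the $zz$ term contributes $t_{33}$, the $xx$ and $yy$ terms contribute $(t_{11}\pm t_{22})$-type combinations weighted by $\sin 2\theta$, and (in the non-diagonalized frame) the antisymmetric $xy$/$yx$ part drops out against a real state. Collecting terms, $\Tr(T\op{\psi}{\psi})$ is a linear combination of $t_{33}$ (coefficient $1$) and of the off-diagonal-block structure weighted by $\sin 2\theta\le 1$, so one immediately sees $\Tr(T\op{\psi}{\psi})\le \text{(value at }\theta=\pi/4)$ provided the relevant $2\times 2$ block can be simultaneously optimized. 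Here is where rank $\le 2$ enters: if $t_{ij}$ had full rank, the three singular-value directions would generically be incompatible with the two-dimensional "$\sin 2\theta$-accessible" subspace for a single $\theta$; with rank $\le 2$, one singular direction is absent, and the remaining structure can be rotated (via the local $SO(3)$ freedom established in the preamble) so that the whole operator is supported on the coordinate planes accessible to $\ket{\Phi^+}$, making $\theta=\pi/4$ simultaneously optimal for every term.

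Concretely, after the reduction to the diagonal form $T=d_1\sigma_1\otimes\sigma_1+d_3\sigma_3\otimes\sigma_3$ guaranteed by the discussion preceding the proposition (rank $\le 2$ forces at most two diagonal entries in a principal-axis frame), I would compute $\Tr(T\op{\psi}{\psi})=d_1\sin 2\theta + d_3$ (up to signs that can be fixed by local unitaries), which is manifestly maximized at $\theta=\pi/4$, giving $|d_1|+|d_3|$, attained by $\ket{\Phi^+}$ (or a local-unitary rotation of it, which is still maximally entangled). The main obstacle I anticipate is handling the off-diagonal ($i\neq j$) terms of a general correlation operator cleanly: one must argue carefully that conjugating by local unitaries lets us assume $t_{ij}$ is diagonal \emph{and} that diagonalizing via $SO(3)\times SO(3)$ on the two sides (the singular value decomposition of $t_{ij}$, which may involve reflections) is implementable by genuine $SU(2)$ unitaries on the qubits — the sign/orientation bookkeeping between $O(3)$ and $SU(2)$ is the delicate point, but it is resolved by noting that a determinant-$(-1)$ rotation on the correlation matrix corresponds to composing with a Pauli flip, which only permutes the $\epsilon_i$ signs and is harmless since we take absolute values of the $d_i$ at the end.
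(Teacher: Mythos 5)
Your route is genuinely different from the paper's: the paper writes an arbitrary pure state as $R\otimes\mathbb{I}\ket{\Phi^+}$ with $\Tr[RR^\dagger]=1$, reduces $T$ by local unitaries to the diagonal form $d_1\sigma_1\otimes\sigma_1+d_3\sigma_3\otimes\sigma_3$, and bounds $\bra{\psi}T\ket{\psi}=\sum_i d_i\Tr[R^\dagger\sigma_i R\sigma_i^{T}]$ by Cauchy--Schwarz, with equality at $R\propto\mathbb{I}$. Your Schmidt-angle parametrization can be made to work, and is in some ways cleaner, but as written it has two problems. The decisive computation is the one you only gesture at in your second paragraph: if you put the \emph{state} (not $T$) into computational Schmidt form, $\ket{\psi}=\cos\theta\ket{00}+\sin\theta\ket{11}$, then for a completely general correlation matrix every off-diagonal term vanishes ($\bra{\psi}\sigma_i\otimes\sigma_j\ket{\psi}=0$ for $i\neq j$) and
\begin{equation*}
\bra{\psi}T\ket{\psi}=t_{33}+(t_{11}-t_{22})\sin 2\theta,
\end{equation*}
which is affine in $\sin 2\theta$ and hence maximized at $\sin 2\theta=\pm 1$, i.e.\ at a maximally entangled state. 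Note that this needs no rank hypothesis at all; your appeal to rank $\le 2$ to make $\theta=\pi/4$ ``simultaneously optimal for every term'' is a red herring on this route. (The rank condition is what makes the \emph{paper's} reduction to two diagonal terms, and the attainability of its Cauchy--Schwarz bound at $R\propto\mathbb{I}$, go through: since $\sigma_2^{T}=-\sigma_2$, a $\sigma_2\otimes\sigma_2$ term contributes $\Tr[\sigma_2\sigma_2^{T}]<0$ at $\ket{\Phi^+}$ and would spoil equality.)

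The second, more concrete gap is in your third paragraph: you conflate two incompatible frame choices by first diagonalizing $t$ with local unitaries and \emph{then also} assuming the maximizing state has computational Schmidt basis, arriving at $\bra{\psi}T\ket{\psi}=d_1\sin 2\theta+d_3$. Those two normalizations cannot in general be imposed simultaneously --- once $T$ is fixed in diagonal form, the optimizer's Schmidt basis is an unknown rotated basis --- so that one-parameter formula does not range over all candidate states, and the argument as stated is incomplete. The repair is simply to drop the diagonalization of $t$, fix only the state's Schmidt frame, and use the displayed identity above; this also disposes of the $O(3)$-versus-$SU(2)$ determinant bookkeeping you worry about at the end, since no singular value decomposition of $t$ is ever needed. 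With that repair your argument is correct and in fact slightly more general than the paper's, the rank condition becoming unnecessary.
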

\begin{proof}
Clearly the expectation value is maximized by some pure state $\ket{\psi}$. It then suffices to prove the proposition for diagonal correlation operators, $T=d_1\sigma_1\otimes\sigma_1+d_3\sigma_3\otimes\sigma_3$, since any Bell correlation operator can be converted into this form by local unitaries, and the latter does not change the entanglement of the maximizing state.  Write an arbitrary pure state as $\ket{\psi}=R\otimes \mathbb{I}\ket{\Phi^+}$, where $\ket{\Phi^{+}}=(\ket{00}+\ket{11})\sqrt{2}$ and $R$ is some matrix satisfying $\psi^A:=\Tr_{A'}(\op{\psi}{\psi})=RR^\dagger$ with $\Tr{(\psi^A)}=1$.  Then an application of the Cauchy-Schwartz Inequality gives
\begin{align}
\bra{\psi}T\ket{\psi}&=d_1\Tr[R^\dagger\sigma_1 R \sigma^{T}_1]+d_3\Tr[R^\dagger\sigma_3 R \sigma^{T}_3]\\
&\leq (d_1+d_3)\Tr[R^\dagger R]=d_1+d_3.
\end{align}
This upper bound is attained by taking $R=\mathbb{I}/2$, which corresponds to a maximally entangled state.
\end{proof}
\begin{corollary}
If $\mathcal{E}$ is a unital channel and $\mathcal{B}$ is any CHSH operator, the CHSH value $\Tr[\mathcal{B}\left(\id\otimes\mathcal{E}(\op{\psi}{\psi})\right)]$ is maximized by a maximally entangled state.
\end{corollary}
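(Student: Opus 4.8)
The plan is to pass to the adjoint channel and reduce everything to Proposition~\ref{Prop:full-correlation-max}. First I would use the defining property of the adjoint map, $\Tr[\mathcal{E}^{\dagger}(X)Y]=\Tr[X\mathcal{E}(Y)]$, to rewrite the CHSH value as
\begin{equation}
\Tr\!\left[\mathcal{B}\left(\id\otimes\mathcal{E}(\op{\psi}{\psi})\right)\right]=\Tr\!\left[\left(\id\otimes\mathcal{E}^{\dagger}\right)(\mathcal{B})\,\op{\psi}{\psi}\right],
\end{equation}
so that maximizing over input states $\ket{\psi}^{AA'}$ is the same as maximizing the expectation value of the \emph{fixed} operator $T:=(\id\otimes\mathcal{E}^{\dagger})(\mathcal{B})$ over pure states.

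Next I would verify that $T$ is again a correlation operator whose correlation matrix has rank at most two, which is exactly the hypothesis of Proposition~\ref{Prop:full-correlation-max}. Writing the CHSH operator in correlation form $\mathcal{B}=\sum_{i,j=1}^{3}B_{ij}\,\sigma_{i}\otimes\sigma_{j}$ with $\mathrm{rank}(B)\le 2$ (as recalled just above the Proposition), it suffices to track how $\mathcal{E}^{\dagger}$ acts on $\mathbb{I}$ and on the Pauli operators. Since $\mathcal{E}$ is trace preserving, $\mathcal{E}^{\dagger}$ is unital, i.e.\ $\mathcal{E}^{\dagger}(\mathbb{I})=\mathbb{I}$; and since $\mathcal{E}$ is unital, $\Tr[\mathcal{E}^{\dagger}(\sigma_{j})]=\Tr[\sigma_{j}\,\mathcal{E}(\mathbb{I})]=\Tr[\sigma_{j}]=0$, so $\mathcal{E}^{\dagger}$ maps traceless Hermitian operators to traceless Hermitian operators. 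Hence $\mathcal{E}^{\dagger}(\sigma_{j})=\sum_{k}\Lambda'_{kj}\sigma_{k}$ for some real $3\times3$ matrix $\Lambda'$, and therefore $T=\sum_{i,k}(B\Lambda'^{\,T})_{ik}\,\sigma_{i}\otimes\sigma_{k}$ has correlation matrix $B\Lambda'^{\,T}$ with $\mathrm{rank}(B\Lambda'^{\,T})\le\mathrm{rank}(B)\le 2<3$.

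Finally I would invoke Proposition~\ref{Prop:full-correlation-max} applied to $T$: its expectation value is maximized by a maximally entangled state, and by the displayed identity this is precisely the input $\ket{\psi}$ that maximizes the original CHSH value. This proves the corollary. I do not expect any genuine obstacle here — the argument is entirely carried by Proposition~\ref{Prop:full-correlation-max}; the only points requiring a little care are bookkeeping ones, namely confirming that a CHSH operator carries no identity components (so that it is a bona fide correlation operator), and that $\mathcal{E}^{\dagger}$ inherits unitality and trace preservation from $\mathcal{E}$ so that it cannot raise the rank of the correlation matrix.
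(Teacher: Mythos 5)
Your proposal is correct and follows essentially the same route as the paper: pass to the adjoint map, observe that unitality of $\mathcal{E}$ forces $\Tr[\mathcal{E}^{\dagger}(\sigma_i)]=0$ so that $(\id\otimes\mathcal{E}^{\dagger})(\mathcal{B})$ remains a correlation operator with correlation matrix of rank less than $3$, and then invoke Proposition~\ref{Prop:full-correlation-max}. Your version merely spells out the rank bookkeeping ($B\mapsto B\Lambda'^{\,T}$) that the paper leaves implicit.
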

\begin{proof}
We have $\Tr[\mathcal{B}\left(\id\otimes\mathcal{E}(\op{\Phi^{+}}{\Phi^{+}})\right)]$ which is equivalent to $\Tr[\op{\Phi^{+}}{\Phi^{+}}\left(\id\otimes\mathcal{E}^{\dagger}(\mathcal{B})\right)]$ where $\mathcal{E}^\dagger$ is the dual map of $\mathcal{E}$.  If $\mathcal{E}$ is unital, then $\Tr[\mathcal{E}^\dagger(\sigma_i)]=0$, which means that $T=\id\otimes\mathcal{E}^{\dagger}(\mathcal{B})$ is still a correlation operator having correlation matrix with rank less than 3.  From the previous proposition, the corollary follows.
\end{proof}
\section{CHSH-breaking Conditions for the Amplitude Damping and Loss Channels}\label{proof:amplitude}

As we have shown in the previous section, for a unital channel to be CHSH-breaking it is sufficient to show that the channel is CHSH-breaking for the maximally entangled state. However, this is usually incorrect for general quantum channels; some counterexamples can be found in~\cite{25}. \par
For a non-unital channel, there is no such simple test. Here we analytically find the CHSH-breaking condition for a specific class of channels where
\begin{equation}
\mathbb{T}=\begin{pmatrix}1&0&0&0\\
0&\lambda_1&0&0\\
0&0&\lambda_1&0\\
t_3&0&0&\lambda_3
\end{pmatrix}.
\end{equation}
As can be checked, both the amplitude damping channel and the loss channel belong to this class. 
\par
For a pure state $\ket{\tilde{\psi}}=(UW_{\lambda}V^T\otimes\mathbb{I})\ket{\Phi^{+}}$ where $W_{\lambda}=\sqrt{\lambda}\op{0}{0}+\sqrt{1-\lambda}\op{1}{1}$ with $\lambda\ge 1/2$, the correlation function can be expressed as~\cite{25}:
\begin{equation}
T=\begin{pmatrix}\alpha R_{11}\lambda_1&-\alpha R_{21}\lambda_1&\alpha R_{31}\lambda_3\\
\alpha R_{12}\lambda_1&-\alpha R_{22}\lambda_1&\alpha R_{32}\lambda_3\\
R_{13}\lambda_1&-R_{23}\lambda_1&R_{33}\lambda_3+\sqrt{1-\alpha^2}t_3\end{pmatrix},
\end{equation} 
where $\alpha=2\sqrt{\lambda(1-\lambda)}$ and $R\in SO(3)$ is a real rotation matrix corresponding to $V^T\in SU(2)$; hence, by using the orthogonality $\sum_j R_{ij}R_{kj}=\delta_{ik}$, we have $H=TT^{\dagger}$:
\begin{equation}
H=\alpha^2\lambda_1^2\begin{pmatrix}1&0&0\\
0&1&0\\
0&0&1\end{pmatrix}+H',\label{APbreaking}
\end{equation}
where $H'$ is not full rank and not positive semi-definite when $\lambda_3^2\le\lambda_1^2$. Hence, it will have at most one positive eigenvalue. 
\par
Based on the Horodecki criteria~\cite{32}, the CHSH-breaking condition is determined by $\lambda_i+\lambda_j\le1$ where $\lambda_i,\lambda_j(i\ne j)$ are the two largest eigenvalues of $H$. Therefore, from~(\ref{APbreaking}), the CHSH-breaking condition is given by $2\alpha^2\lambda_1^2+\lambda_{l}\le 1$, where $\lambda_{l}$ is the largest eigenvalue of $H'$ over all choices of $R$.
\par For the case of the amplitude damping channel with $\lambda_1=\lambda_2=\sqrt{p},\lambda_3=p, t_3=1-p$, 
\begin{equation}
H'=-p(1-p)\boldsymbol{a}^T\boldsymbol{a}+(1-\alpha^2)\boldsymbol{b}^T\boldsymbol{b},
\end{equation}
where $\boldsymbol{a}=\{\alpha R_{31},\alpha R_{31},R_{33}-(1-\alpha^2)\}^T$, and $ \boldsymbol{b}=\{0,0,1\}^T$. The largest eigenvalue of $H'$ is $\lambda_l=1-\alpha^2$ obtained when $\boldsymbol{a}\perp\boldsymbol{b}$. Hence the CHSH-breaking condition will be given by
\begin{equation}
2\alpha^2p+1-\alpha^2=\alpha^2(2p-1)+1\le1.
\end{equation}
As a result, $p\le \frac{1}{2}$ is the CHSH-breaking condition for the amplitude damping channel. 
\par
Similarly, for the loss channel with $\lambda_1=\lambda_2=\lambda_3=p, t_3=1-p$ , $H'$ is given by:
\begin{equation}
\sqrt{1-\alpha^2}\begin{pmatrix}
0&0&\alpha p(1-p)R_{31}\\
0&0&\alpha p(1-p)R_{32}\\
\alpha p(1-p)R_{31}&\alpha p(1-p)R_{32}&2p(1-p)R_{33}+\sqrt{1-\alpha^2}[p^2+(1-p)^2]
\end{pmatrix},
\end{equation}
the largest eigenvalue $\lambda_l=2\sqrt{1-\alpha^2}p(1-p)+({1-\alpha^2})[p^2+(1-p)^2]$ is achieved when $R_{33}=1$. We have the CHSH-breaking condition
\begin{equation}
1+2({1-\sqrt{1-\alpha^2}})\left[(1-p)^2-(2+\sqrt{1-\alpha^2})(1-p)+\frac{\sqrt{1-\alpha^2}+1}{2}\right]\le 1,
\end{equation}
which gives $p\le\frac{[{(\sqrt{1-\alpha^2}+1)^2+1}]^{1/2}-\sqrt{1-\alpha^2}}{2}$; therefore, the CHSH-breaking condition for the loss channel is $p\le \frac{\sqrt{5}-1}{2}$, which is attained when $\alpha\rightarrow 0$.
\section{CHSH-breaking Conditions for the Erasure Channel}\label{proof:erasure}

Since the erasure channel in~(\ref{erasureDescription}) maps a qubit system to a qutrit system, the Horodecki criterion~\cite{32} cannot be directly applied. However, since the channel is basis-independent, we only need to check the CHSH-breaking condition of state $\ket{\tilde{\psi}}=\lambda\ket{00}+\sqrt{1-\lambda}\ket{11}$ with $\lambda\ge 1/2$.\par
Consider the measurement $M^{A}_1,M^{A}_2,N^{B}_1,N^{B}_1$ with spectrum $\{1,-1\}$, where $M^{A}_{1(2)}$ act on the qutrit system and $N^{B}_{1(2)}$ act on the qubit system. We can assume $M^{A}_{1(2)}$ to be of the following form:
\begin{equation}
\begin{pmatrix}
m_{11} &m_{12}&0\\
m_{21} &m_{22}&0\\
0&0&\pm 1
\end{pmatrix}.
\end{equation}
Without loss of generality, we take $m_{33}=1$ for both $M^{A}_1$ and $M^{A}_2$. This gives us
\begin{equation}\label{twoTraces}
\Tr{(\mathcal{B}\rho^{AB})}=p\Tr{(\mathcal{B}'\rho^{'AB})}+2(1-p)\Tr{(N^{B}_1\rho_B)},
\end{equation}
where $\mathcal{B}=M^{A}_1\otimes(N^{B}_1+N^{B}_2)+M^{A}_2\otimes(N^{B}_1-N^{B}_2)$ and $\mathcal{B}'=m^{A}_1\otimes(N^{B}_1+N^{B}_2)+m^{A}_2\otimes(N^{B}_1-N^{B}_2)$ with $m^{A}_{1(2)}$ being the $2\times2$ block matrix of $M^{A}_{1(2)}$. In~(\ref{twoTraces}) we have defined
\begin{equation}
\rho^{'AB}=\op{\tilde{\psi}}{\tilde{\psi}},\quad\rho^B=\Tr_A{\rho^{'AB}}=\begin{pmatrix}
\lambda & 0\\
0& 1-\lambda
\end{pmatrix}.
\end{equation}
\par
The maximum possible value of the first term in~\ref{twoTraces} given by the Horodecki criterion is $2\sqrt{1+4\lambda(1-\lambda)}$. This maximum can be achieved by the measurement settings $N_1=\sigma_z$, $N_2=\sigma_x$, $m_1= \cos{\theta}\sigma_z+\sin{\theta}\sigma_x$, $m_2= \cos{\theta}\sigma_z-\sin{\theta}\sigma_x$, where $\cos{\theta}=1/\sqrt{1+4\lambda(1-\lambda)}$. Notice that this choice of measurements also maximizes the second term with the value $2\lambda-1$. Hence, the CHSH violation is given by
\begin{equation}
2p\sqrt{1+4\lambda(1-\lambda)}+(1-p)(2\lambda-1)\le 2.
\end{equation}
One can check that when $p\leq 1/2$, the CHSH Inequality holds for all quantum states $\ket{\tilde{\psi}}$. Therefore, the erasure channel is CHSH-breaking for $p\leq 1/2$.
\section*{References}
\bibliography{references.bib}

\providecommand{\noopsort}[1]{}\providecommand{\singleletter}[1]{#1}%
\providecommand{\newblock}{}
\begin{thebibliography}{10}
\expandafter\ifx\csname url\endcsname\relax
  \def\url#1{{\tt #1}}\fi
\expandafter\ifx\csname urlprefix\endcsname\relax\def\urlprefix{URL }\fi
\providecommand{\eprint}[2][]{\url{#2}}

\bibitem{2}
Bell J~S 1964 {\em Physics Physique Fizika\/} {\bf 1}(3) 195--200
  \urlprefix\url{https://link.aps.org/doi/10.1103/PhysicsPhysiqueFizika.1.195}

\bibitem{3}
Brunner N, Cavalcanti D, Pironio S, Scarani V and Wehner S 2014 {\em Rev. Mod.
  Phys.\/} {\bf 86}(2) 419--478
  \urlprefix\url{https://link.aps.org/doi/10.1103/RevModPhys.86.419}

\bibitem{Chitambar-2019a}
Chitambar E and Gour G 2019 {\em Rev. Mod. Phys.\/} {\bf 91}(2) 025001

\bibitem{5}
Ac\'{\i}n A, Brunner N, Gisin N, Massar S, Pironio S and Scarani V 2007 {\em
  Phys. Rev. Lett.\/} {\bf 98}(23) 230501
  \urlprefix\url{https://link.aps.org/doi/10.1103/PhysRevLett.98.230501}

\bibitem{6}
Ac\'{\i}n A, Gisin N and Masanes L 2006 {\em Phys. Rev. Lett.\/} {\bf 97}(12)
  120405 \urlprefix\url{https://link.aps.org/doi/10.1103/PhysRevLett.97.120405}

\bibitem{7}
Pironio S, Ac{\'i}n A, Massar S, de~la Giroday A~B, Matsukevich D~N, Maunz P,
  Olmschenk S, Hayes D, Luo L, Manning T~A and Monroe C 2010 {\em Nature\/}
  {\bf 464} 1021 EP -- \urlprefix\url{https://doi.org/10.1038/nature09008}

\bibitem{8}
Masanes L, Pironio S and Ac{\'i}n A 2011 {\em Nature Communications\/} {\bf 2}
  238 EP -- article \urlprefix\url{https://doi.org/10.1038/ncomms1244}

\bibitem{9}
Brunner N, Gisin N and Scarani V 2005 {\em New Journal of Physics\/} {\bf 7}
  88--88 \urlprefix\url{https://doi.org/10.1088%2F1367-2630%2F7%2F1%2F088}

\bibitem{9a}
Wiseman H~M, Jones S~J and Doherty A~C 2007 {\em Phys. Rev. Lett.\/} {\bf
  98}(14) 140402
  \urlprefix\url{https://link.aps.org/doi/10.1103/PhysRevLett.98.140402}

\bibitem{10}
Werner R~F 1989 {\em Phys. Rev. A\/} {\bf 40}(8) 4277--4281
  \urlprefix\url{https://link.aps.org/doi/10.1103/PhysRevA.40.4277}

\bibitem{11}
Popescu S 1995 {\em Phys. Rev. Lett.\/} {\bf 74}(14) 2619--2622
  \urlprefix\url{https://link.aps.org/doi/10.1103/PhysRevLett.74.2619}

\bibitem{12}
Gisin N 1996 {\em Physics Letters A\/} {\bf 210} 151 -- 156 ISSN 0375-9601
  \urlprefix\url{http://www.sciencedirect.com/science/article/pii/S0375960196800016}

\bibitem{13}
Peres A 1996 {\em Phys. Rev. A\/} {\bf 54}(4) 2685--2689
  \urlprefix\url{https://link.aps.org/doi/10.1103/PhysRevA.54.2685}

\bibitem{14}
Hastings M~B 2009 {\em Nature Physics\/} {\bf 5} 255 EP --
  \urlprefix\url{https://doi.org/10.1038/nphys1224}

\bibitem{15}
Smith G and Yard J 2008 {\em Science\/} {\bf 321} 1812--1815 ISSN 0036-8075
  \urlprefix\url{https://science.sciencemag.org/content/321/5897/1812}

\bibitem{16}
Horodecki P, Horodecki M and Horodecki R 1999 {\em Phys. Rev. Lett.\/} {\bf
  82}(5) 1056--1059
  \urlprefix\url{https://link.aps.org/doi/10.1103/PhysRevLett.82.1056}

\bibitem{17}
Shor P~W, Smolin J~A and Thapliyal A~V 2003 {\em Phys. Rev. Lett.\/} {\bf
  90}(10) 107901
  \urlprefix\url{https://link.aps.org/doi/10.1103/PhysRevLett.90.107901}

\bibitem{18}
Navascu\'es M and V\'ertesi T 2011 {\em Phys. Rev. Lett.\/} {\bf 106}(6) 060403
  \urlprefix\url{https://link.aps.org/doi/10.1103/PhysRevLett.106.060403}

\bibitem{19}
Palazuelos C 2012 {\em Phys. Rev. Lett.\/} {\bf 109}(19) 190401
  \urlprefix\url{https://link.aps.org/doi/10.1103/PhysRevLett.109.190401}

\bibitem{20}
Caban P, Molenda A and Trzci\ifmmode~\acute{n}\else \'{n}\fi{}ska K 2015 {\em
  Phys. Rev. A\/} {\bf 92}(3) 032119
  \urlprefix\url{https://link.aps.org/doi/10.1103/PhysRevA.92.032119}

\bibitem{21}
Brunner N, Cavalcanti D, Salles A and Skrzypczyk P 2011 {\em Phys. Rev.
  Lett.\/} {\bf 106}(2) 020402
  \urlprefix\url{https://link.aps.org/doi/10.1103/PhysRevLett.106.020402}

\bibitem{22}
Cavalcanti D, Almeida M~L, Scarani V and Ac{\'i}n A 2011 {\em Nature
  Communications\/} {\bf 2} 184 EP -- article
  \urlprefix\url{https://doi.org/10.1038/ncomms1193}

\bibitem{23}
Cavalcanti D, Rabelo R and Scarani V 2012 {\em Phys. Rev. Lett.\/} {\bf 108}(4)
  040402
  \urlprefix\url{https://link.aps.org/doi/10.1103/PhysRevLett.108.040402}

\bibitem{24}
Cavalcanti D, Ac\'{\i}n A, Brunner N and V\'ertesi T 2013 {\em Phys. Rev. A\/}
  {\bf 87}(4) 042104
  \urlprefix\url{https://link.aps.org/doi/10.1103/PhysRevA.87.042104}

\bibitem{24a}
Bennett C~H, Brassard G, Cr\'epeau C, Jozsa R, Peres A and Wootters W~K 1993
  {\em Phys. Rev. Lett.\/} {\bf 70}(13) 1895--1899
  \urlprefix\url{https://link.aps.org/doi/10.1103/PhysRevLett.70.1895}

\bibitem{Liu-2019a}
Liu Y and Yuan X 2019 Operational resource theory of quantum channels
  (\textit{Preprint} \eprint{arXiv:1904.02680})

\bibitem{Liu-2019b}
Liu Z~W and Winter A 2019 Resource theories of quantum channels and the
  universal role of resource erasure (\textit{Preprint}
  \eprint{arXiv:1904.04201})

\bibitem{25}
Pal R and Ghosh S 2015 {\em Journal of Physics A: Mathematical and
  Theoretical\/} {\bf 48} 155302
  \urlprefix\url{https://doi.org/10.1088%2F1751-8113%2F48%2F15%2F155302}

\bibitem{24b}
{Horodecki} M, {Shor} P~W and {Ruskai} M~B 2003 {\em Reviews in Mathematical
  Physics\/} {\bf 15} 629--641 (\textit{Preprint} \eprint{quant-ph/0302031})

\bibitem{Almeida-2007a}
Almeida M~L, Pironio S, Barrett J, T\'oth G and Ac\'{\i}n A 2007 {\em Phys.
  Rev. Lett.\/} {\bf 99}(4) 040403
  \urlprefix\url{https://link.aps.org/doi/10.1103/PhysRevLett.99.040403}

\bibitem{26}
Choi M~D 1975 {\em Linear Algebra and its Applications\/} {\bf 10} 285 -- 290
  ISSN 0024-3795
  \urlprefix\url{http://www.sciencedirect.com/science/article/pii/0024379575900750}

\bibitem{Peres-1999a}
Peres A 1999 {\em Foundations of Physics\/} {\bf 29} 589--614

\bibitem{Horodecki-2003a}
Horodecki M, Shor P~W and Ruskai M~B 2003 {\em Reviews in Mathematical
  Physics\/} {\bf 15} 629--641

\bibitem{Horodecki-1999a}
Horodecki M and Horodecki P 1999 {\em Phys. Rev. A\/} {\bf 59}(6) 4206--4216

\bibitem{30}
Clauser J~F, Horne M~A, Shimony A and Holt R~A 1969 {\em Phys. Rev. Lett.\/}
  {\bf 23}(15) 880--884
  \urlprefix\url{https://link.aps.org/doi/10.1103/PhysRevLett.23.880}

\bibitem{27}
Ruskai M~B, Szarek S and Werner E 2002 {\em Linear Algebra and its
  Applications\/} {\bf 347} 159 -- 187 ISSN 0024-3795
  \urlprefix\url{http://www.sciencedirect.com/science/article/pii/S002437950100547X}

\bibitem{28}
Fujiwara A and Algoet P 1999 {\em Phys. Rev. A\/} {\bf 59}(5) 3290--3294
  \urlprefix\url{https://link.aps.org/doi/10.1103/PhysRevA.59.3290}

\bibitem{31}
P\'al K~F and V\'ertesi T 2010 {\em Phys. Rev. A\/} {\bf 82}(2) 022116
  \urlprefix\url{https://link.aps.org/doi/10.1103/PhysRevA.82.022116}

\bibitem{Code}
Zhang Y, Araiza~Bravo R, Chitambar E and Lorenz V 2019 Dataset for "channel
  activation of chsh nonlocality"
  \urlprefix\url{https://doi.org/10.13012/B2IDB-3686727_V1}

\bibitem{hidden}
Hirsch F, Quintino M~T, Bowles J and Brunner N 2013 {\em Phys. Rev. Lett.\/}
  {\bf 111}(16) 160402
  \urlprefix\url{https://link.aps.org/doi/10.1103/PhysRevLett.111.160402}

\bibitem{32}
Horodecki R, Horodecki P and Horodecki M 1995 {\em Physics Letters A\/} {\bf
  200} 340 -- 344 ISSN 0375-9601
  \urlprefix\url{http://www.sciencedirect.com/science/article/pii/037596019500214N}

\end{thebibliography}
\end{document}